\documentclass[conference]{IEEEtran}
\usepackage{amsmath,amssymb,epsfig,psfrag,cite,subfigure}
\usepackage{graphicx}
\usepackage{epstopdf}

\usepackage{acronym}

\usepackage{bm}

\usepackage{pifont}

\usepackage{tikz}
\usetikzlibrary{calc, arrows.meta, positioning, decorations.pathmorphing}

\usepackage{tcolorbox}

\usepackage{accents}
\usepackage{hyperref}

\allowdisplaybreaks
\usepackage{soul}

\usepackage{amsthm}

\usepackage{nicefrac}

\usepackage{lipsum,multicol}

\usepackage{algorithm}
\usepackage{algpseudocode}
\algdef{SE}[SUBALG]{Indent}{EndIndent}{}{\algorithmicend\ }%
\algtext*{Indent}
\algtext*{EndIndent}

\usepackage{enumitem}
\usepackage{mwe}    
\usepackage{epsfig,psfrag}
\usepackage{subfigure}
\usepackage{color}
\usepackage{url}
\usepackage{mathtools,xparse}

\usepackage{gensymb}

\usepackage[multiple]{footmisc}

\usepackage{xr}
\makeatletter
\newcommand*{\addFileDependency}[1]{% argument=file name and extension
  \typeout{(#1)}
  \@addtofilelist{#1}
  \IfFileExists{#1}{}{\typeout{No file #1.}}
}
\makeatother

\newcommand*{\myexternaldocument}[1]{%
    \externaldocument{#1}%
    \addFileDependency{#1.tex}%
    \addFileDependency{#1.aux}%
}

\IfFileExists{supp.aux}{\myexternaldocument{supp}}{}

\usepackage{xcolor}

\newcommand\abss[1]{\lvert#1\rvert}

\newcommand{\Eee}{\mathbb{E}}

\newcommand{\snr}{{\rm{SNR}}}

\newcommand{\snra}{{\rm{SNR}}_{\rm{A}}}
\newcommand{\snre}{{\rm{SNR}}_{\rm{E}}}
\newcommand{\snrc}{{\rm{SNR}}_{\rm{C}}}

\newcommand{\rxii}{R_{\xi \xi}}
\newcommand{\rxiie}{R_{\xi \xi}^{\rm{E}}}
\newcommand{\deltat}{\Delta t}

\newcommand{\sigmaxi}{\sigma^2_{\xi}}

\newcommand{\Tsym}{ T_{\rm{sym}} }

\newcommand{\Tcp}{ T_{\rm{cp}} }

\newcommand{\FF}{ \mathbf{F} }

\newcommand{\deltaf}{ \Delta f }

\newcommand{\fc}{ f_c }

\newcommand{\stilde}{ \widetilde{s} }

\newcommand{\ytilde}{ \widetilde{y} }

\newcommand{\mtN}{{\mathcal{N}}}

\newcommand{\fdb}{ f_{\rm{3dB}}^{\rm{A} }}
\newcommand{\fdbe}{ f_{\rm{3dB}}^{\rm{E}} }
\newcommand{\fdbc}{ f_{\rm{3dB}}^{\rm{C}} }

\newcommand{\boldRe}{ \boldR^{\rm{E}}  }

\newcommand{\rect}[1]{ { \rm{rect} }\left(#1\right) }

\newcommand{\mtCN}{{\mathcal{CN}}}

\newcommand{\tracebigg}[1]{ {{{\rm{tr}}\big( #1 \big)}}} 

\newcommand{\realp}[1]{ \Re \left\{#1\right\}  }
\newcommand{\imp}[1]{ \Im \left\{#1\right\}  }

\newcommand{\Imatrix}{{ \boldsymbol{\mathrm{I}} }}

\newcommand{\bb}{ \mathbf{b} }

\newcommand{\boldzero}{{ {\boldsymbol{0}} }}
\newcommand{\boldone}{{ {\boldsymbol{1}} }}

\newcommand{\boldR}{ \mathbf{R} }
\newcommand{\boldRgen}{ \mathbf{R}^{\rm{gen}} }

\newcommand{\xigen}{ \xi^{\rm{gen}} }
\newcommand{\xiedyn}{ \widetilde{\xi}^{\rm{E}} }

\newcommand{\yy}{ \mathbf{y} }

\newcommand{\xx}{ \mathbf{x} }

\newcommand{\ww}{ \mathbf{w} }
\newcommand{\zz}{ \mathbf{z} }

\newcommand{\bxi}{ \boldsymbol{\xi} }

\newcommand{\bxia}{ \bxi^{\rm{A}} }

\newcommand{\trp}{\mathsf{T}}
\newcommand{\herm}{\mathsf{H}}

\newcommand{\cset}[2]{ \mathbb{C}^{#1 \times #2}  }

\newcommand{\rset}[2]{ \mathbb{R}^{#1 \times #2}  }

\newcommand{\conj}{ {\ast} }

\newcommand{\etab}{ {\boldsymbol{\eta}} }

\newcommand{\alphaa}{ \alpha^{\rm{A}} }
\newcommand{\taua}{ \tau^{\rm{A}} }

\newcommand{\alphasetilde}{ \widetilde{\alpha}^{\rm{E}} }

\newcommand{\alphase}{ \alpha^{\rm{E}} }
\newcommand{\tause}{ \tau^{\rm{E}} }

\newcommand{\ytildea}{ \ytilde^{\rm{A}}(t) }

\newcommand{\yya}{ \yy^{\rm{A}} }
\newcommand{\yye}{ \yy^{\rm{E}} }

\newcommand{\ycj}{ y^{\rm{C}} }
\newcommand{\zcj}{ z^{\rm{C}} }

\newcommand{\xiet}{ \xi^{\rm{E}} }

\newcommand{\xiestilde}{ \widetilde{\bxi}^{\rm{E}} }

\newcommand{\wwsetilde}{ \widetilde{\ww}^{\rm{E}} }

\newcommand{\wwse}{ \ww^{\rm{E}} }

\newcommand{\wwa}{ \ww^{\rm{A}} }
\newcommand{\zza}{ \zz^{\rm{A}} }

\newcommand{\zze}{ \zz^{\rm{E}} }

\newcommand{\phia}{ \phi^{\rm{A}} }
\newcommand{\phie}{ \phi^{\rm{E}} }
\newcommand{\phic}{ \phi^{\rm{C}} }

\makeatletter \renewcommand\d[1]{\ensuremath{%
		\;\mathrm{d}#1\@ifnextchar\d{\!}{}}}
\makeatother

\makeatletter
\newcommand*\rel@kern[1]{\kern#1\dimexpr\macc@kerna}
\newcommand*\widebar[1]{%
  \begingroup
  \def\mathaccent##1##2{%
    \rel@kern{0.8}%
    \overline{\rel@kern{-0.8}\macc@nucleus\rel@kern{0.2}}%
    \rel@kern{-0.2}%
  }%
  \macc@depth\@ne
  \let\math@bgroup\@empty \let\math@egroup\macc@set@skewchar
  \mathsurround\z@ \frozen@everymath{\mathgroup\macc@group\relax}%
  \macc@set@skewchar\relax
  \let\mathaccentV\macc@nested@a
  \macc@nested@a\relax111{#1}%
  \endgroup
}
\makeatother

\theoremstyle{remark}

\newtheoremstyle{mytheoremstyle} % name
    {\topsep}                    % Space above
    {\topsep}                    % Space below
    {\upshape}                   % Body font
    {.5em}                           % Indent amount
    {\itshape}                   % Theorem head font
    {.}                          % Punctuation after theorem head
    {.5em}                       % Space after theorem head
    {}  % Theorem head spec (can be left empty, meaning ‘normal’)

\theoremstyle{plain}

\newtheoremstyle{iremark}
  {\topsep}   % ABOVESPACE
  {\topsep}   % BELOWSPACE
  {\upshape}  % BODYFONT
  {0.2in}       % INDENT (empty value is the same as 0pt)
  {\itshape}  % HEADFONT
  {.}         % HEADPUNCT
  {5pt plus 1pt minus 1pt} % HEADSPACE
  {\thmname{#1}\thmnumber{ \itshape#2}\thmnote{ (#3)}} % CUSTOM-HEAD-SPEC

\newtheorem{proposition}{Proposition}
\theoremstyle{definition}

\newcommand{\etabe}{ {\boldsymbol{\eta}^{\rm{E}}} }   % Eve's parameter vector (wrapped so _0 subscripting is safe)
\newcommand{\xiat}{ {\xi^{\rm{A}}} }                  % Alice's differential PN scalar function

\acrodef{RIS}{reconfigurable intelligent surface}
\acrodef{SNR}{signal-to-noise ratio}
\acrodef{SINR}{signal-to-interference-plus-noise ratio}
\acrodef{ISAC}{integrated sensing and communication}
\acrodef{ISLAC}{integrated sensing, localization, and communication}
\acrodef{LOS}{line-of-sight}
\acrodef{NLOS}{non-line-of-sight}
\acrodef{AOA}{angle-of-arrival}
\acrodef{AOD}{angle-of-departure}

\acrodef{TOA}{time-of-arrival}
\acrodef{TDOA}{time-difference-of-arrival}

\acrodef{UE}{user equipment}
\acrodef{NF}{near-field}
\acrodef{BS}{base station}
\acrodef{AP}{access point}
\acrodef{MCRB}{misspecified Cram\'{e}r-Rao bound}
\acrodef{CRB}{Cram\'{e}r-Rao bound}
\acrodef{LB}{lower bound}
\acrodef{ML}{maximum-likelihood}
\acrodef{MML}{mismatched maximum-likelihood}
\acrodef{DL}{downlink}
\acrodef{UL}{uplink}
\acrodef{VRU}{vulnerable road user}
\acrodef{MIMO}{multiple-input multiple-output}
\acrodef{MISO}{multiple-input single-output}
\acrodef{SISO}{single-input single-output}
\acrodef{SIP}{shift invariance property}
\acrodef{FIM}{Fisher information matrix}
\acrodef{RMSE}{root mean-squared error}
\acrodef{AWGN}{additive white Gaussian noise}
\acrodef{ADMM}{alternating direction method of multipliers}
\acrodef{LS}{least-squares}
\acrodef{SOC}{second-order cone}
\acrodef{CFO}{carrier frequency offset}
\acrodef{i.i.d.}{independently and identically distributed}
\acrodef{MI}{mutual information}
\acrodef{SAC}[S\&C]{sensing and communication}
\acrodef{OTFS}{orthogonal time frequency space}

\acrodef{ULA}{uniform linear array}
\acrodef{FRO}{free-running oscillator}
\acrodef{PLL}{phase-locked loop}
\acrodef{PSLR}{peak-to-sidelobe level ratio}

\acrodef{PN}{phase noise}
\acrodef{LO}{local oscillator}
\acrodef{CP}{cyclic prefix}
\acrodef{OFDM}{orthogonal frequency-division multiplexing}

\acrodef{MF}{matched filtering}
\acrodef{RF}{reciprocal filtering}

\acrodef{CPE}{common phase error}
\acrodef{ICI}{intercarrier interference}
\acrodef{RCS}{radar cross section}

\acrodef{TX}{transmitter}
\acrodef{TCXO}{temperature-compensated crystal oscillator}
\acrodef{RX}{receiver}
\acrodef{QPSK}{quadrature phase-shift keying}
\acrodef{SPG}{sensing privacy gap}
\acrodef{SSB}{synchronization signal block}
\acrodef{PDF}{probability density function}
\acrodef{dB}{decibel}

\graphicspath{{./Figures/}}
\usepackage{pgfplots}
\usepgfplotslibrary{groupplots}
\pgfplotsset{compat=1.18}

\IEEEoverridecommandlockouts

\begin{document}
\bstctlcite{IEEEexample:BSTcontrol}

%%%%%%%%%%%%%%%%%% title page information %%%%%%%%%%%%%%%%%%
% \title{Role of Phase Noise in Sensing Privacy Protection in ISAC Systems}
\title{Exploiting Phase Noise for Sensing Privacy in ISAC Systems}

\author{\IEEEauthorblockN{Musa Furkan Keskin\IEEEauthorrefmark{1}, Kawon Han\IEEEauthorrefmark{2}, Henk Wymeersch\IEEEauthorrefmark{1}, Christos Masouros\IEEEauthorrefmark{3}
}
\vspace{0.1cm}
\IEEEauthorblockA{\IEEEauthorrefmark{1}Dept. of Electrical Engineering, Chalmers University of Technology, Sweden
} 
\IEEEauthorblockA{
\IEEEauthorrefmark{2}Ulsan National Institute of Science and Technology, Republic of Korea
}
\IEEEauthorblockA{
\IEEEauthorrefmark{3}Dept. of Electronic \& Electrical Engineering, University College London, UK
}\thanks{This work is supported by the SNS JU project 6G-DISAC under the EU's Horizon Europe research and innovation Program under Grant Agreement No 101139130 and by the Swedish Research Council (VR) through the project 6G-PERCEF under Grant 2024-04390.}
}

% %%%
% \makeatletter
% \def\@IEEEsectpunct{.\ \,} % (optional) fix section punctuation style
% \makeatother

% make the title area
\maketitle

%%%%%%%%%%%%%%%%%%%%%%%% abstract %%%%%%%%%%%%%%%%%%%%%%%%

\begin{abstract}
We investigate sensing privacy in \ac{OFDM} \ac{ISAC} systems under the impact of \ac{PN} arising from \ac{LO} imperfections. Specifically, we consider an ISAC scenario comprising a legitimate monostatic ISAC transceiver (Alice), an eavesdropper performing unauthorized bistatic sensing (Eve) and a communication user (UE), each equipped with a non-ideal \ac{LO}. To characterize sensing performance in the presence of \ac{PN}, we carry out a \ac{MCRB} analysis of monostatic and bistatic range estimation at Alice and Eve, whose differential \ac{PN} processes are self-correlated (delay-dependent) and cross-correlated (delay-independent) due to the use of a shared and an independent \ac{LO}, respectively. Simulation results reveal three-way trade-offs among legitimate monostatic sensing at Alice, unauthorized bistatic sensing at Eve and communication to the UE under \ac{PN}, governed by the \ac{LO} quality at Alice. Through the \ac{LO} asymmetry between Alice and Eve, worsening \ac{LO} quality at Alice can significantly enlarge sensing privacy gap in her favor, especially for nearby targets, with only a moderate reduction in data rate in noise-limited regimes.

\textit{Index Terms--} OFDM, ISAC, sensing privacy, bistatic sensing, phase noise, three-way trade-offs.
\end{abstract}

%%%%%%%%%%%%%%%%%%%%%%%%%%  body  %%%%%%%%%%%%%%%%%%%%%%%%%%
\section{Introduction}
Integrated sensing and communication (ISAC) has emerged as a cornerstone technology for future wireless networks, enabling simultaneous communication and environmental awareness through shared spectrum, waveforms and hardware resources \cite{Fan_ISAC_6G_JSAC_2022}. Such an integration can support numerous applications such as sensing-aided beam alignment \cite{ISAC_Veh_magazine_2025}, but it also raises new privacy concerns \cite{han2026next}. In particular, the same radiated ISAC waveform used for legitimate sensing can be exploited by unauthorized sensing receivers to passively infer location and velocity of targets from reflected signals \cite{6G_ISAC_Security_2025}. Hence, sensing privacy, i.e., protecting the privacy of sensed entities, constitutes a distinct design objective in ISAC systems rather than being merely an extension of communication security \cite{multidomain_magazine_2026}.

Recent works have addressed physical layer sensing privacy through ambiguity function (AF) shaping \cite{sensingSecure_ISAC_AF_2025,borui_secure_KLD_2025}, generative diffusion models \cite{genAI_secure_ISAC_2025} and exploitation of known-location scatterers \cite{chen2025sensingsecuritynearfieldisac}. A common take-away from these studies is that ISAC transmit signal  design in time, frequency or spatial domains can intentionally degrade unauthorized sensing while preserving the performance of legitimate sensing. In addition, the three-way trade-offs among communication rate, legitimate and unauthorized sensing must be rigorously optimized to avoid hampering communication functionality while damaging passive sensing eavesdroppers \cite{sensingSecure_ISAC_AF_2025}. Despite intense research efforts, a major gap persists in existing sensing privacy works with regard to the analysis of how hardware impairments (HWIs) can reshape sensing privacy and corresponding three-way trade-offs. This gap can be crucial for millimeter-wave and higher-frequency ISAC systems, where phase noise (PN), power amplifier nonlinearity and mutual coupling can severely impact both communication and sensing \cite{RF_JCS_2021}.

%%%%%%%%%%%%%%%%%%%%%%%%%%%%%%%%%%%%%%
\begin{figure}[t]
    \centering
    \includegraphics[width=\linewidth]{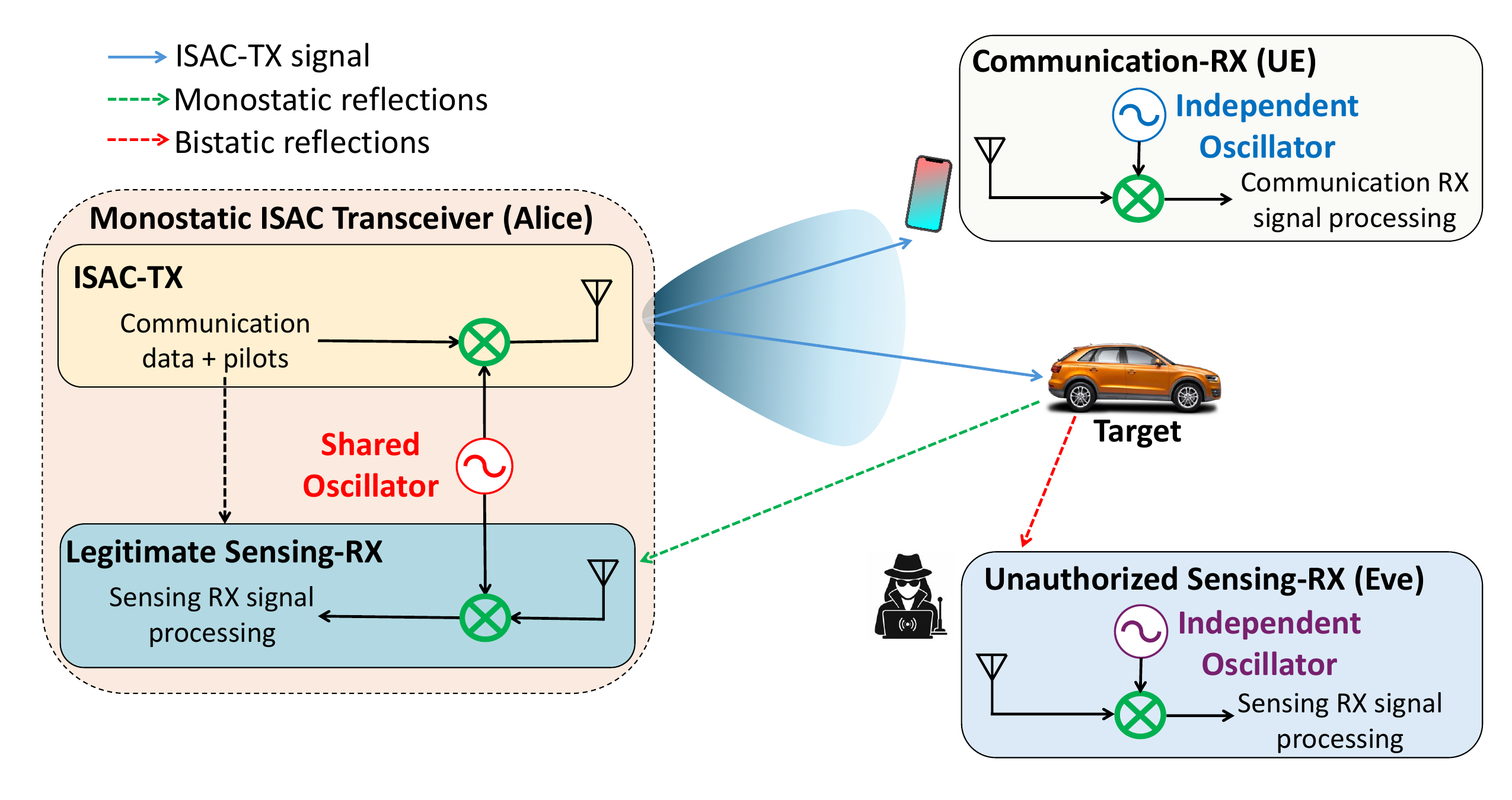}
        \vspace{-0.1in}
    \caption{Sensing-secure ISAC setting with legitimate monostatic sensing at Alice, unauthorized bistatic sensing at Eve and downlink communication to the UE. All nodes employ imperfect oscillators. The key architectural asymmetry is the shared \ac{LO} at Alice versus the independent \acp{LO} at Eve and the UE.}
    \label{fig_scenario}
\end{figure}
%%%%%%%%%%%%%%%%%%%%%%%%%%%%%%%%%%%%%%

Among HWIs, PN poses a particularly formidable challenge as it has a rapidly time-varying nature and its severity increases with carrier frequency \cite{PN_CohBw_2021_TWC}. In monostatic sensing, the transmitter and sensing receiver share the same \ac{LO}, which implies that downconversion produces a self-referenced differential \ac{PN} (DPN) process whose statistics depend on the target delay, i.e., the range correlation effect \cite{Range_Correlation_93,OFDM_PN_TSP_Exploitation_2023,PN_OFDM_ISAC_RadarConf_2023}. In bistatic sensing, by contrast, the observed DPN becomes the difference of transmit and receive \ac{PN} from separate \acp{LO} due to an independent receive \ac{LO}. Hence, unlike monostatic sensing where closer targets yield smaller DPN fluctuations through range correlation, bistatic DPN statistics are not range-dependent. In \cite{Kawon_Sync_PN_DISAC_RadarConf25}, the authors have demonstrated the significantly higher vulnerability of bistatic sensing to PN than monostatic sensing, which results from the above-mentioned \ac{LO} asymmetry. However, it remains unexplored how this asymmetry affects sensing privacy gap between legitimate monostatic sensing and unauthorized bistatic sensing in ISAC systems.

Motivated by this asymmetry, we investigate a sensing-secure ISAC system comprising a legitimate monostatic ISAC transceiver (Alice), an unauthorized passive bistatic sensing receiver (Eve) and a communication user equipment (UE), each equipped with a non-ideal \ac{LO} impaired by \ac{PN}. Unlike
waveform-domain sensing privacy-enhancing methods, our focus is on whether \textit{hardware-induced LO asymmetry itself can act as a
privacy-enhancing mechanism.} The main contributions are: \textit{(i)} We propose a novel three-party OFDM ISAC framework that explicitly captures the shared-LO architecture at Alice and
independent-LO architectures at Eve and the UE; \textit{(ii)} We provide a rigorous characterization of the differential PN statistics at Eve, and prove their
delay-independence, in contrast to the delay-dependent PN statistics at Alice; \textit{(iii)} We reveal the three-way trade-offs among communication performance, legitimate sensing and unauthorized sensing under \ac{PN}, and show that \ac{LO} asymmetry can be leveraged to enlarge sensing privacy gap in favor of Alice while incurring a moderate communication rate penalty in noise-limited regimes.

%%%%%%%%%%%%%%%%%%%%%%%%%%%%

%%%%%%%%%%%%%%%%%%%%%%%%%%%%

%%%%%%%%%%%%%%%%%%%%%%%%%%
\section{System Model}\label{sec_system_model}
\subsection{Scenario Description}\label{sec_scenario}
We consider an ISAC system consisting of both legitimate and malicious entities, all equipped with a single antenna, as shown in Fig.~\ref{fig_scenario}. The legitimate entities include \textit{(i)} the monostatic ISAC transceiver (Alice) equipped with an ISAC \ac{TX} and a sensing \ac{RX} on the same hardware platform, and \textit{(ii)} the communication \ac{RX} (\ac{UE}). Alice transmits an \ac{OFDM} signal to the \ac{UE} and simultaneously uses the backscattered echoes for monostatic sensing. Meanwhile, an unauthorized RX (Eve) acts as a passive bistatic sensing receiver eavesdropping on the legitimate ISAC system by exploiting the same illumination to infer target parameters independently\footnote{With single-antenna nodes, the sensing task involves only delay/range estimation. Angle estimation and localization require multiple antennas or additional temporal/spatial diversity and are left for future extensions.}. Moreover, the ISAC TX and the sensing RX at Alice share the same \ac{LO} \cite{OFDM_PN_TSP_Exploitation_2023,PN_OFDM_ISAC_RadarConf_2023}, whereas the UE and Eve each employ independent \acp{LO} at their remote receivers \cite{Kawon_Sync_PN_DISAC_RadarConf25}. Due to hardware imperfections, all the \acp{LO} are assumed to be non-ideal and impaired by \ac{PN} \cite{Demir_PN_2000,PN_OFDM_PLL_TCOM_2007}. The \acp{LO} at Alice, Eve and the UE have the \ac{PN} processes $\phia(t)$, $\phie(t)$ and $\phic(t)$, respectively.

\subsection{Transmit Signal Model}\label{sec_transmit}
The \ac{TX} sends one \ac{OFDM} symbol with $N$ subcarriers, subcarrier spacing $\deltaf=1/T$, useful duration $T$, \ac{CP} duration $\Tcp$ and total duration $\Tsym=\Tcp+T$ \cite{RadCom_Proc_IEEE_2011,sensingSecure_ISAC_AF_2025}. The occupied bandwidth is $B=N\deltaf$. The complex baseband signal is 
\begin{align}\label{eq_ofdm_baseband}
s(t)=\frac{1}{\sqrt{N}}\sum_{n=0}^{N-1}x_n e^{j2\pi n\deltaf t}\rect{\frac{t}{\Tsym}},
\end{align}
where $x_n$ denotes the transmitted symbol on subcarrier $n$ and $\rect{t}=1$ for $t\in[0,1]$ and zero otherwise. With \ac{PN}, the passband transmit signal is \cite{PN_2006,OFDM_PN_TSP_Exploitation_2023}
\begin{align}\label{eq_passband_st}
\stilde(t)=\Re\{s(t)e^{j(2\pi\fc t+\phia(t))}\}.
\end{align}

\subsection{Received Signal Model at Alice}\label{sec_radar_rec_alice}
For a point target with monostatic round-trip delay $\taua$ and complex gain $\alphaa$, the passband echo at Alice is \cite{OFDM_PN_TSP_Exploitation_2023}
\begin{align}\label{eq_rec_passband}
    \ytildea=\Re\{\alphaa s(t-\taua)e^{j[2\pi\fc(t-\taua)+\phia(t-\taua)]}\}.
\end{align}
Downconversion by Alice's shared noisy \ac{LO} gives \cite{PN_SI_TSP_2017,OFDM_PN_TSP_Exploitation_2023}
\begin{align}\label{eq_rec_baseband}
y^{\rm A}(t)=\alphaa s(t-\taua)e^{-j2\pi\fc\taua}w^{\rm A}(t,\taua) \,,
\end{align}
where $w^{\rm A}(t,\tau)= e^{-j\xiat(t,\tau)}$ with
\begin{align}\label{eq_pn_def_mono}
\xiat(t, \tau) \triangleq \phia(t)-\phia(t-\tau) \,.
\end{align}
Sampling at $t_\ell=\Tcp+\ell T/N$, $\ell=0,\ldots,N-1$, and absorbing deterministic carrier phase into $\alphaa$, we obtain the time-domain observation vector \cite{MIMO_OFDM_ICI_JSTSP_2021,OFDM_PN_TSP_Exploitation_2023}
\begin{align}\label{eq_y_all}
    \yya=\alphaa\,\wwa\odot \FF_N^\herm\big(\xx\odot\bb(\taua)\big)+\zza \,,
\end{align}
where $\yya\in\cset{N}{1}$, $\xx=[x_0,\ldots,x_{N-1}]^\trp$, $\FF_N$ is the unitary DFT matrix, $\zza\sim\mtCN(\boldzero,\sigma^2\Imatrix)$,
\begin{align}\label{eq_steer_delay}
\bb(\tau)=\big[1,e^{-j2\pi\deltaf\tau},\ldots,e^{-j2\pi(N-1)\deltaf\tau}\big]^\trp,
\end{align}
and $[\wwa]_\ell=e^{-j \xiat(t_\ell,\taua)}$. The standard \ac{CP} condition $\Tcp\geq\taua$ is assumed \cite{OFDM_Radar_Phd_2014}.

\subsection{Received Signal Model at Eve}\label{sec_radar_rec_eve}
Unlike the shared-oscillator transceiver architecture of Alice, Eve employs an independent \ac{LO}. 
%and no usable \ac{LOS} reference path from the \ac{TX}; only the target-scattered echo reaches her sensing front end. 
To isolate the impact of \ac{PN} on sensing privacy and disentangle it from confounding factors such as timing or data symbol uncertainty, we model Eve as the worst-case (most powerful) adversary by adopting two assumptions: \textit{(i)} Eve has acquired and maintains \ac{OFDM} frame timing using the same standard-compliant over-the-air synchronization procedures relied upon by the \ac{UE} \cite{bistatic_OTA_2025,passive_radar_tutorial_2019}, so Eve is perfectly time synchronized with Alice and her sampling grid coincides with that of Alice, i.e., $t_\ell=\Tcp+\ell T/N$; and \textit{(ii)} Eve has perfect knowledge of the transmit symbols $\xx$. Under \textit{(i)} and \textit{(ii)}, the only remaining structural difference between the observations at Alice and Eve results from the independent \ac{PN} process acting on Eve's samples due to her independent \ac{LO}; any sensing privacy gap  therefore originates from this \ac{LO} asymmetry alone.

Under these assumptions, the received observation vector at Eve after \ac{CP} removal is \cite{Kawon_Sync_PN_DISAC_RadarConf25}
\begin{align}\label{eq_rec_baseband_eve_y}
    \yye=\alphase\,\wwse\odot \FF_N^\herm\big(\xx\odot\bb(\tause)\big)+\zze \,,
\end{align}
where $\yye\in\cset{N}{1}$, $\tause$ is the bistatic delay, i.e., the delay of the two-segment \ac{TX}-target-Eve path, $\zze\sim\mtCN(\boldzero,\sigma^2\Imatrix)$ and
\begin{align}\label{eq_pn_def_bistatic}
    [\wwse]_\ell&=e^{-j\xiet(t_\ell,\tause)} \,, \\ \label{eq_dpn_eve}
    \xiet(t,\tau)&  \triangleq \phie(t)-\phia(t-\tau) \,.
\end{align}
The \ac{CP} condition $\Tcp\geq\tause$ keeps $t_\ell-\tause\geq0$. The received signal at Eve in \eqref{eq_rec_baseband_eve_y} differs from that at Alice in \eqref{eq_y_all} in that the DPN in \eqref{eq_dpn_eve} contains two independent \ac{LO} processes rather than a delayed copy of one shared \ac{LO} as in \eqref{eq_pn_def_mono}. Secs.~\ref{sec_pn_stat} and~\ref{sec_perf_met} will show how this asymmetry propagates into the PN statistics and the sensing bounds.

\subsection{Received Signal Model at the UE}\label{sec_radar_rec_ue}
For the UE, we employ the standard model involving \ac{CPE} and \ac{ICI} to obtain the observation on subcarrier $n$ \cite{PN_OFDM_PLL_TCOM_2007,PN_OFDM_rate_2011}
\begin{align}\label{eq_comm_final}
    \ycj_n=x_n h_nJ_0+\sum_{k=0, k \neq n}^{N-1}x_kh_kJ_{n-k}+\zcj_n \,,
\end{align}
where $h_n$ denotes the communication channel on subcarrier $n$, $\zcj_n\sim\mtCN(0,\sigma^2)$ and \cite{PN_OFDM_PLL_TCOM_2007,PN_OFDM_rate_2011}
\begin{align}\label{eq_pn_dft}
J_i&=\frac{1}{N}\sum_{\ell=0}^{N-1}e^{j\theta_\ell}e^{-j2\pi \ell i/N} \,,
\\ \label{eq_theta_ell}
\theta_\ell&=[\phia(t)-\phic(t)]_{t=t_\ell} \,.
\end{align}
Here, $J_0$ is the \ac{CPE} coefficient, $\{J_i\}_{i\neq0}$ generate \ac{ICI} and $\{\theta_\ell\}$ represent the PN samples containing contributions from the \ac{PN} processes of both Alice and the UE, where $t_\ell=\Tcp+\ell T/N$.\footnote{In \eqref{eq_comm_final}--\eqref{eq_theta_ell}, we adopt the standard narrowband \ac{PN} approximation $\phia(t-\tau_h)\approx\phia(t)$, where $\tau_h$ is the delay spread of the communication channel. This approximation is accurate whenever the $3 \, \rm{dB}$ bandwidth of the Lorentzian oscillator spectrum is small compared to the subcarrier spacing, i.e., $\fdb \ll \deltaf$, or, equivalently, $\tau_h \ll 1/\fdb$, which holds for most practical scenarios \cite{PN_OFDM_PLL_TCOM_2007,PN_OFDM_rate_2011}. Moreover, unlike \eqref{eq_rec_baseband_eve_y}--\eqref{eq_dpn_eve}, we deliberately omit the absolute path delays from the argument of $\phia(t)$ in \eqref{eq_theta_ell} since the absolute delays do not affect the resulting PN statistics, as will be shown in Sec.~\ref{sec_eve_pn_stat}. Hence, the absolute sampling instants $t_\ell$ are immaterial in this model. What matters instead is that $\theta_\ell$ is formed by the combination of the realization of two independent \ac{PN} processes, each consisting of $N$ samples spaced by $T/N$, which completely determines the relevant statistical behavior \cite{PN_OFDM_PLL_TCOM_2007}.}

\subsection{Problem Statement}\label{sec_problem}
Given \eqref{eq_y_all}, \eqref{eq_rec_baseband_eve_y} and \eqref{eq_comm_final}, our goal is to evaluate the impact of \ac{PN} on legitimate sensing, unauthorized sensing and downlink communication under a wide range of conditions including SNR, oscillator quality and target range. To assess sensing performance at Alice and Eve, we employ both the standard \ac{CRB} (for PN-free baselines) and the \ac{MCRB} \cite{Fortunati2017} (for PN-ignorant processing) while communication performance is evaluated through the \ac{PN}-contaminated \ac{SINR} expression \cite{PN_OFDM_rate_2011}\footnote{With perfect knowledge of the channel $\{h_n\}$, the UE performs demodulation of the transmit data symbols $\xx$ under the impact of $\ac{PN}$ in \eqref{eq_theta_ell} (i.e., without carrying out CPE correction in alignment with PN-ignorant processing at Alice and Eve).}.

%%%%%%%%%%%%%%%%%%%%%%%%%%%%%%%%%%%%%
\section{Phase Noise Statistics}\label{sec_pn_stat}
This section provides a statistical characterization of the \ac{PN} processes at Alice and Eve, which will be used in Sec.~\ref{sec_perf_met} to derive performance metrics for sensing at the respective receivers. For the \ac{FRO} models adopted in this work, the Wiener \ac{PN} processes are modeled as \cite[Corr.~7.1]{Demir_PN_2000}, \cite[Sec.~III-A]{PN_OFDM_PLL_TCOM_2007}, \cite[Sec.~III-A]{PN_OFDM_rate_2011} 
\begin{align}
\phia(t) &\sim \mtN(0, 4 \pi \fdb t) \,, \\
\phie(t) &\sim \mtN(0, 4 \pi \fdbe t) \,, \\ 
\phic(t) &\sim \mtN(0, 4 \pi \fdbc t) \,,
\end{align}
where $\fdb$, $\fdbe$ and $\fdbc$ denote the $3 \, \rm{dB}$ bandwidth of the Lorentzian oscillator spectrum at the respective \acp{LO}.

\subsection{Alice: Shared-LO and Self-Referenced PN}
In \eqref{eq_pn_def_mono} and \eqref{eq_y_all}, let us define $\bxia \in \rset{N}{1}$ such that $\wwa = e^{-j \bxia}$. The statistics of the \textit{self-referenced} \ac{PN} $\bxia$ are given by \cite{OFDM_PN_TSP_Exploitation_2023,PN_OFDM_ISAC_RadarConf_2023}
\begin{align}\label{eq_bxi_stat}
    \bxia \sim \mtN(\boldzero, \boldR(\taua)) \,,
\end{align}
where $\boldR(\taua) \in \rset{N}{N}$ is the \textit{delay-dependent} covariance matrix of $\bxia$, with the entries
\begin{align}\label{eq_rtau_entries}
    \left[ \boldR(\tau) \right]_{n_1,n_2} = \rxii\big((n_1-n_2) T/N, \tau \big) \,.
\end{align}
In \eqref{eq_rtau_entries}, we have \cite{OFDM_PN_TSP_Exploitation_2023}
\begin{align}\label{eq_exp_xi2}
     \rxii(\deltat, \tau) &= \frac{ \sigmaxi(\tau+\deltat) + \sigmaxi(\tau-\deltat) }{2} - \sigmaxi(\deltat) \,,
\end{align}
where $\sigmaxi(\tau) = 4 \pi \fdb \abss{\tau}$. This yields 
\begin{align}\label{eq_alice_triangular}
    \rxii(\deltat,\tau)=4\pi\fdb\max(\tau-|\deltat|,0) \,.
\end{align}
The range correlation effect manifests in \eqref{eq_alice_triangular}, i.e., smaller target delays make the shared-\ac{LO} samples more correlated and reduce the variance of the DPN process at Alice \cite{OFDM_PN_TSP_Exploitation_2023}.

\subsection{Eve: Independent-LO and Cross-Referenced PN}\label{sec_eve_pn_stat}
\subsubsection{Reparameterization of \eqref{eq_rec_baseband_eve_y}}
To derive the statistics of $\xiet(t,\tau)$ in \eqref{eq_dpn_eve},  we first rewrite it for $t=\Tcp+u$ and $u \geq 0$ as
\begin{align}\label{eq_eve_split}
\xiet(\Tcp+u,\tau)
&=\underbrace{\phie(\Tcp)-\phia(\Tcp-\tau)}_{\rm{common~phase}} + \xiedyn(u) \,,
\end{align}
where
\begin{align} \label{eq_psiu}
    \xiedyn(u) &\triangleq [\phie(\Tcp+u)-\phie(\Tcp)]
    \\ \nonumber &~~~~-[\phia(\Tcp-\tau+u)-\phia(\Tcp-\tau)] \,.
\end{align}
The common phase term in \eqref{eq_eve_split} is independent of the time index $u$, rotates all $N$ samples in \eqref{eq_rec_baseband_eve_y} by the same scalar and thus can be absorbed into the channel gain $\alphase$. Hence, we can reparameterize \eqref{eq_rec_baseband_eve_y} into
\begin{align}\label{eq_rec_baseband_eve_y_re}
    \yye=\alphasetilde\,\wwsetilde\odot \FF_N^\herm\big(\xx\odot\bb(\tause)\big)+\zze \,,
\end{align}
where
\begin{align}
\alphasetilde &\triangleq \alphase e^{-j[\phie(\Tcp)-\phia(\Tcp-\tau)]} \,, 
\\ \label{eq_wwsetilde}
    [\wwsetilde]_\ell &\triangleq  e^{-j\xiedyn(u_\ell)} \,,~ u_\ell = \ell T/N \,. 
    \end{align}

\subsubsection{Delay-Independence of PN Statistics}
We define $\xiestilde \in \rset{N}{1}$ such that $\wwsetilde=e^{-j\xiestilde}$ in \eqref{eq_wwsetilde} and provide the following result to characterize its statistics.

\begin{proposition}\label{lemma_pn_stat_general}
    The correlation function of $\xiedyn(t)$ in \eqref{eq_psiu} is given by
    \begin{align} \nonumber
        &\rxiie(t_1,t_2)
        \triangleq \Eee\big[\xiedyn(t_1)\,\xiedyn(t_2)\big] = 4\pi(\fdb+\fdbe)\min(t_1,t_2).
    \end{align}
\end{proposition}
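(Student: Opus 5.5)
The plan is to write $\xiedyn(u)$ in \eqref{eq_psiu} as the difference of two increment processes,
\begin{align} \nonumber
A(u) \triangleq \phie(\Tcp+u)-\phie(\Tcp), \qquad B(u) \triangleq \phia(\Tcp-\tau+u)-\phia(\Tcp-\tau),
\end{align}
so that $\xiedyn(u)=A(u)-B(u)$. The proof then rests on two standard properties of the Wiener models $\phia(t)\sim\mtN(0,4\pi\fdb t)$ and $\phie(t)\sim\mtN(0,4\pi\fdbe t)$ from Sec.~\ref{sec_pn_stat}. First, these are zero-mean processes with independent, stationary Gaussian increments. Second, they are independent of each other, because Alice and Eve use separate \acp{LO}.

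Expanding $\Eee[\xiedyn(t_1)\xiedyn(t_2)]$ gives four terms:
\begin{align} \nonumber
\Eee[A(t_1)A(t_2)] + \Eee[B(t_1)B(t_2)] - \Eee[A(t_1)B(t_2)] - \Eee[B(t_1)A(t_2)].
\end{align}
The two cross terms vanish. Independence of $\phie$ and $\phia$ makes each cross term a product of expectations, and each increment has zero mean.

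For the auto-correlation terms, I use the fact that a Wiener process restarted at any fixed time $s$, namely $W(s+u)-W(s)$, is again a Wiener process in $u$ with the same diffusion rate. Hence its covariance is $c\,\min(t_1,t_2)$. Concretely, assume $t_1\le t_2$ and split $A(t_2)=A(t_1)+[A(t_2)-A(t_1)]$. The second piece is an increment over $[\Tcp+t_1,\Tcp+t_2]$, which is independent of the increment over $[\Tcp,\Tcp+t_1]$. Therefore $\Eee[A(t_1)A(t_2)]=\mathrm{Var}(A(t_1))=4\pi\fdbe t_1$.

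The same argument applies to $B$ with starting point $s=\Tcp-\tau$, giving $4\pi\fdb t_1$. The only requirement is $s\ge 0$, which the \ac{CP} condition $\Tcp\geq\tause$ guarantees. Summing the two terms yields $4\pi(\fdb+\fdbe)\min(t_1,t_2)$.

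The result contains no $\tau$ because the restart point $s$ drops out by stationarity of increments. This is exactly the claimed delay independence, in contrast with \eqref{eq_alice_triangular}.

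The main point needing care is justifying independence of the increments and the variance formula $\mathrm{Var}(\phi(s+u)-\phi(s))=4\pi f\,u$. The marginal law $\mtN(0,4\pi f t)$ alone does not imply it. I would invoke the Wiener-process (independent-increment) model of \cite{Demir_PN_2000,PN_OFDM_PLL_TCOM_2007} explicitly. Equivalently, I would use its covariance $\Eee[\phi(a)\phi(b)]=4\pi f\min(a,b)$ and verify the auto-correlation terms by expanding the four $\min$ terms:
\begin{align} \nonumber
\min(s+t_1,s+t_2)-\min(s+t_1,s)-\min(s,s+t_2)+s=\min(t_1,t_2).
\end{align}
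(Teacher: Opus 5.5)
Your proof is correct and is essentially the argument the paper defers to (it only says the result follows the steps of its cited prior derivation for Alice's self-referenced \ac{PN}): expand $\Eee[\xiedyn(t_1)\xiedyn(t_2)]$ into auto- and cross-terms, drop the cross-terms by independence of the two \acp{LO}, and reduce each auto-term to $\min(t_1,t_2)$ using the Wiener covariance. Your explicit remark that the \ac{CP} condition keeps the restart point $\Tcp-\tau$ nonnegative is a detail the paper leaves implicit.
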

\begin{proof}
    It follows similar steps to those in \cite[Sec.~S-I]{OFDM_PN_TSP_Exploitation_2023}.
\end{proof}
From Proposition~\ref{lemma_pn_stat_general}, we note that as opposed to the delay-dependent nature of the correlation function \eqref{eq_alice_triangular} of the DPN process $\xiat(t, \tau)$ in \eqref{eq_pn_def_mono} at Alice, the DPN process $\xiedyn(t)$ in \eqref{eq_psiu} at Eve has delay-independent statistics. Based on Proposition~\ref{lemma_pn_stat_general} and \eqref{eq_wwsetilde}, the PN statistics at Eve can be obtained as
\begin{align}\label{eq_pn_stat_eve_dynamic}
    \xiestilde\sim\mtN(\boldzero,\boldRe) \,,
\end{align}
where $\boldRe \in \rset{N}{N}$ is a delay-independent covariance with
\begin{align} \label{eq_dpn_eve_delay_ind}  [\boldRe]_{n_1,n_2}=4\pi(\fdb+\fdbe)\min(u_{n_1},u_{n_2})\,.
\end{align}
Unlike the covariance of the \textit{self-referenced} \ac{PN} $\bxia$ in \eqref{eq_rtau_entries}, which depends on the time lag between the sampling instants (but not on those instants themselves, thus exhibiting Toeplitz property \cite{OFDM_PN_TSP_Exploitation_2023}), the covariance of the \textit{cross-referenced} \ac{PN} $\xiestilde$ in \eqref{eq_dpn_eve_delay_ind} depends on the sampling instants themselves, due to independent \ac{PN} processes $\phia(t)$ and $\phie(t)$ entering $\xiedyn(t)$ in \eqref{eq_psiu}. As a result, the variance of the DPN process at Eve scales with the total bandwidth of the two \acp{LO} \cite{PN_OFDM_PLL_TCOM_2007} and is independent of the target delay, as established in Proposition~\ref{lemma_pn_stat_general}.

\subsubsection{Contrast Between Alice and Eve}
Given the aforementioned analysis, the sensing privacy related asymmetry results from the fact that the PN at Eve, represented by $\xiedyn(u)$ in \eqref{eq_psiu}, includes the sum of two independent \ac{PN} walks over the OFDM symbol duration, whereas Alice observes a self-referenced increment over the target delay in \eqref{eq_pn_def_mono}. For nearby targets, this distinction can be substantial as Alice compares $\phia(t)$ and $\phia(t-\taua)$ over a small interval while Eve compares the random walks of $\phie(t)$ and $\phia(t)$ through two unrelated \acp{LO} whose relative phase continues to drift across the whole OFDM symbol.

To illustrate this contrast, we plot in Fig.~\ref{fig_pn_stats_contrast} the mean accumulated PN variance of Alice and Eve over $N$ fast-time samples, defined as 
\begin{align} \nonumber
    \frac{1}{N}\tracebigg{\boldRgen(\tau)} &= \Eee \bigg\{ \sum_{\ell = 0}^{N-1} \abss{\xigen(t_\ell, \tau)}^2 /N \bigg\} \,,
    \\ \label{eq_acc_PN}
    &= \begin{cases}
        4 \pi \fdb \tau \,, &\rm{Alice} \\
        2 \pi (\fdb + \fdbe) T \,, &\rm{Eve}
    \end{cases}
\end{align}
 for $t_\ell = \ell T/N$, where $\boldRgen(\tau) \in \{ \boldR(\tau), \boldRe \}$ and $\xigen(t,\tau) \in \{ \xiat(t,\tau), \xiedyn(t)  \}$, corresponding to Alice and Eve, respectively. As seen from the figure and \eqref{eq_acc_PN}, the PN variance of Eve is independent of target delay $\tau$ and scales with $T$ as $\fdb$ increases while the PN variance of Alice depends on $\tau$ and scales with $\tau$. Since $\tau \leq \Tcp \ll T$, Eve experiences a much faster growth in PN distortion than Alice as $\fdb$ increases, which creates the key privacy mechanism. Namely, increasing PN severity can substantially degrade the sensing performance of Eve while leaving Alice's comparatively less affected, thereby widening the sensing privacy gap.

%%%%%%%%%%%%%%%%%%%%%%%%%%%%%%%%%%%%%%%%%%
% target RCS
\begin{figure}[t]
	\centering
    %\vspace{-0.1in}
	\includegraphics[width=1\linewidth]{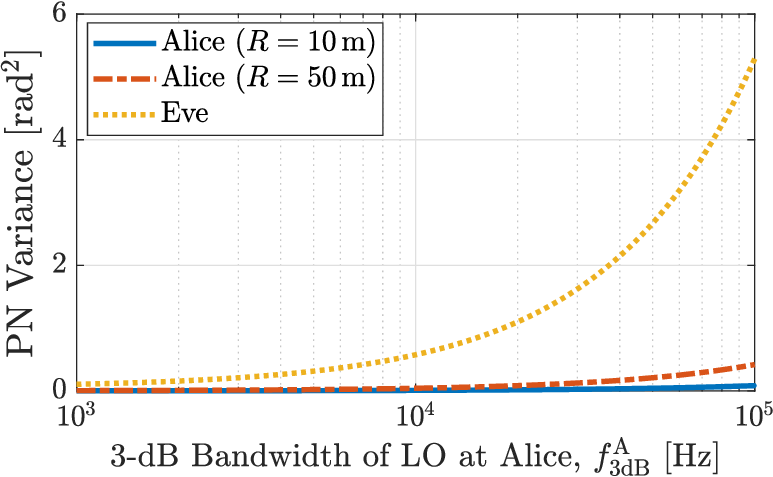}
	\vspace{-0.2in}
	\caption{Mean accumulated PN variance in \eqref{eq_acc_PN}, belonging to Eve and Alice at different ranges $R$, with respect to $\fdb$, where $\fdbe = 1 \, \rm{kHz}$.} 
	\label{fig_pn_stats_contrast}
	%\vspace{-0.1in}
\end{figure}
%%%%%%%%%%%%%%%%%%%%%%%%%%%%%%%%%%%%%%%%%%

%%%%%%%%%%%%%%%%%%%%%%%%%%%%%%%%%%%%%
\section{Performance Metrics}\label{sec_perf_met}
Based on the \ac{PN} statistics in Sec.~\ref{sec_pn_stat}, in this section we derive the sensing and communication performance metrics using the observations in \eqref{eq_y_all}, \eqref{eq_rec_baseband_eve_y_re} and \eqref{eq_comm_final}, following the standard \ac{CRB} and \ac{MCRB} machinery of \cite{Fortunati2017,OFDM_PN_TSP_Exploitation_2023,PN_OFDM_ISAC_RadarConf_2023}.

\subsection{Sensing Performance of Alice}\label{sec_sens_alice}
Based on the signal model in \eqref{eq_y_all}, we derive the \ac{CRB} and \ac{MCRB} on estimation of $\taua$ \cite[Sec.~III]{PN_OFDM_ISAC_RadarConf_2023}.

\subsubsection{CRB without Phase Noise}\label{sec_crb_alice}
This corresponds to the \textit{PN-free} case. The signal model becomes
\begin{align}\label{eq_sig_det_crb_alice}
    \yya = \alphaa \, \FF_N^\herm \big(\xx \odot \bb(\taua)  \big) + \zza \,,
\end{align}
where the unknown parameter vector is given by
\begin{align}\label{eq_eta_crb_alice}
    \etab = \left[ \taua, \realp{\alphaa}, \imp{\alphaa} \right]^\trp \in \rset{3}{1} \,.
\end{align}
Then, the derivation of the CRB for $\taua$ follows the same steps as those presented in \cite[Sec.~III-A]{PN_OFDM_ISAC_RadarConf_2023}.

\subsubsection{Misspecified CRB under Phase Noise}\label{sec_mcrb_alice}
This corresponds to the \textit{PN-ignorant} case. The true model with \ac{PN} is \eqref{eq_y_all} for a given \ac{PN} realization $\bxia$, while the assumed model without \ac{PN} corresponds to \eqref{eq_sig_det_crb_alice}, with the misspecified parametric \ac{PDF} parameters given by \eqref{eq_eta_crb_alice}. The remaining derivations follow those in \cite[Sec.~III-C]{PN_OFDM_ISAC_RadarConf_2023}.

\subsection{Sensing Performance of Eve}\label{sec_sens_eve}
To derive the bounds on estimation of $\tause$, we use the signal model in \eqref{eq_rec_baseband_eve_y_re} and the PN statistics in \eqref{eq_pn_stat_eve_dynamic}.
%where $\wwsetilde=e^{-j\xiestilde}$.

\subsubsection{CRB without Phase Noise}\label{sec_crb_eve}
The \textit{PN-free} case ($\wwsetilde=\boldone$) reduces \eqref{eq_rec_baseband_eve_y_re} to
\begin{align}\label{eq_sig_det_crb}
    \yye = \alphasetilde\,\FF_N^\herm \big(\xx\odot\bb(\tause)\big) + \zze \,,
\end{align}
where the unknown parameter vector is given by
\begin{align}\label{eq_eta_eve}
    \etabe =   \big[\tause, \realp{\alphasetilde}, \imp{\alphasetilde}\big]^\trp \in \rset{3}{1} \,.
\end{align}
%Here, the data-symbol phases $\thetadb\triangleq\{{\theta}_{{\rm{d}}_n}\}_{n\in\setNd}\in[-\pi,\pi)^{\abss{\setNd}}$ are continuous deterministic nuisance parameters.

\subsubsection{Misspecified CRB under Phase Noise}\label{sec_mcrb_eve}
In this \textit{PN-ignorant} case, the true model with \ac{PN} is \eqref{eq_rec_baseband_eve_y_re} for a given \ac{PN} realization $\xiestilde$ with the statistics in \eqref{eq_pn_stat_eve_dynamic}, while the assumed model without \ac{PN} is given by \eqref{eq_sig_det_crb}, with the parameters of the misspecified parametric \ac{PDF} defined in \eqref{eq_eta_eve}.

%%%%%%%%%%%%%%%%%%%%%%%%%%%%%%%%%%%%%
\subsection{Communication Performance with the UE}
To quantify the communication rate under PN, we first evaluate the \ac{SINR} per subcarrier. To that end, we assume that the data symbols satisfy $\Eee\{x_i\} = 0, \, \forall i$ and $\Eee\{x_i x_j^\conj \} = \delta(i-j) \, \forall i,j$. Then, given \eqref{eq_comm_final}, the SINR at subcarrier $n$ conditioned on fixed channels $h_n$  can be computed as \cite{PN_OFDM_rate_2011}
\begin{align}\label{eq_sinr}
    \gamma_n = \frac{ \abss{h_n}^2 \Eee\{\abss{J_0}^2\} }{ \sum_{k=0, k \neq n}^{N-1}  \abss{h_k}^2 \Eee\{\abss{J_{n-k}}^2\} + \sigma^2 } \,,
\end{align}
where, from \eqref{eq_pn_dft}, we have
\begin{align} \nonumber
    \Eee\{\abss{J_i}^2\} = \frac{1}{N^2} \sum_{\ell_1=0}^{N-1} \sum_{\ell_2=0}^{N-1} \Eee\big\{ e^{j (\theta_{\ell_1} - \theta_{\ell_2})} \big\} e^{-j 2 \pi (\ell_1-\ell_2)i/N} \,.
\end{align}
Using \eqref{eq_theta_ell}, the expectation evaluates to \cite[Eq.~(10)]{PN_2006}, \cite[Eq.~(19)]{PN_OFDM_PLL_TCOM_2007}
\begin{align}\label{eq_theta_fro_comm}
    \Eee\big\{ e^{j (\theta_{\ell_1} - \theta_{\ell_2})} \big\} = e^{ - 2 \pi (\fdb + \fdbc) \abss{\ell_1-\ell_2} T/N } \,.
\end{align}
As in Eve's PN statistics in \eqref{eq_dpn_eve_delay_ind}, the PN statistics at the UE capture the impact of two independent \acp{LO} whose total bandwidth determines the severity of \ac{PN} in \eqref{eq_theta_fro_comm}. With the SINRs in \eqref{eq_sinr}, the average achievable rate is \cite{PN_OFDM_rate_2011}
\begin{align}\label{eq_cap}
    C = \frac{1}{N} \sum_{n = 0}^{N-1}  \log_2( 1 + \gamma_n) \,.
\end{align}

%%%%%%%%%%%%%%%%%%%%%%%%%%%%%%%%%%%%%
\section{Simulation Results}\label{sec_results}
For performance evaluation, we consider the default setup in Table~\ref{tab_parameters} in accordance with 5G NR FR2 parameters \cite{TR_38211} and typical oscillator parameters \cite{PN_OFDM_PLL_TCOM_2007,PN_OFDM_rate_2011,OFDM_PN_LO_ref_TWC_2014}. Given \eqref{eq_y_all}, \eqref{eq_rec_baseband_eve_y_re} and \eqref{eq_comm_final}, the SNRs are defined as  $\snra = |\alphaa|^2/\sigma^2$, $\snre=|\alphasetilde|^2/\sigma^2$ and $\snrc =|h_n|^2/\sigma^2 \, \forall n$, where a \ac{LOS}-only frequency-flat Alice-UE channel is assumed\footnote{Although adopted for simplicity, this channel model is not essential to the qualitative interpretation of the results. More general channel models would mainly affect the quantitative values of the achievable rate without changing the observed trends and trade-offs.}. In addition, target range $R$ refers to the monostatic range with respect to Alice as bistatic sensing performance is independent of $R$ for a fixed target SNR (see \eqref{eq_bxi_stat} and \eqref{eq_pn_stat_eve_dynamic}). Moreover, the \textbf{PN-free} bounds in Sec.~\ref{sec_crb_alice} and Sec.~\ref{sec_crb_eve} serve as an ideal baseline against the \textbf{PN-ignorant} bounds in Sec.~\ref{sec_mcrb_alice} and Sec.~\ref{sec_mcrb_eve}, which are computed as the \ac{LB} (the sum of \ac{MCRB} and the bias term \cite{Fortunati2017,PN_OFDM_ISAC_RadarConf_2023}) and averaged over $100$ PN realizations. The \ac{SPG} is defined as
\begin{align}\label{eq_SPG}
\mathrm{SPG}^{\rm dB}
=10\log_{10}\!\left(\frac{\mathrm{LB}_{\rm E}}
{\mathrm{LB}_{\rm A}}\right) \,,
\end{align}
where $\mathrm{LB}_{\rm E}$ and $\mathrm{LB}_{\rm A}$ denote the PN-ignorant \ac{LB} on range estimation at Eve and Alice, respectively\footnote{For consistency in terminology, we report both the PN-free and PN-ignorant bounds in terms of range \ac{RMSE}.}.

%%%%%%%%%%%%%%%%%%%%%%%%%%%%%%%%%
\begin{table}[t]\footnotesize
\caption{Simulation parameters}
\centering
\begin{tabular}{ll}
\hline
Carrier frequency $\fc$ & $28$ GHz\\
Subcarriers / spacing & $N=256$, $\deltaf=120$ kHz\\
Useful / CP duration & $T=8.33~\mu$s, $\Tcp=0.58~\mu$s\\
%Reference target range & $R=10$ m\\
%Pilot period & $\Kpil=8$\\
Oscillator quality & $\fdb = \fdbe=\fdbc=100$ Hz\\
\hline
\end{tabular}
\label{tab_parameters}
\end{table}
%%%%%%%%%%%%%%%%%%%%%%%%%%%%%%%%%%%%%

\subsection{Sensing Performance vs. SNR}\label{sec_result_snr}
Fig.~\ref{fig_results_sensing_snr} shows the ranging performance at Alice and Eve with respect to $\snr = \snra = \snre$ under PN-free and PN-ignorant processing. We observe that ignoring PN in sensing processing leads to performance saturation at high SNRs since PN becomes dominant over \ac{AWGN} with increasing SNR. Eve reaches this performance plateau at a lower SNR than Alice because Eve suffers from much more severe PN, as illustrated in Fig.~\ref{fig_pn_stats_contrast}, due to the use of an independent \ac{LO}. In addition, the ranging performance of PN-ignorant Alice degrades with increasing target range because of the range correlation effect, in alignment with \eqref{eq_acc_PN}.
%and Fig.~\ref{fig_pn_stats_contrast}.      

%%%%%%%%%%%%%%%%%%%%%%%%%%%%%%%%%%%%%%%%%%
% target RCS
\begin{figure}[t]
	\centering
    %\vspace{-0.1in}
	\includegraphics[width=1\linewidth]{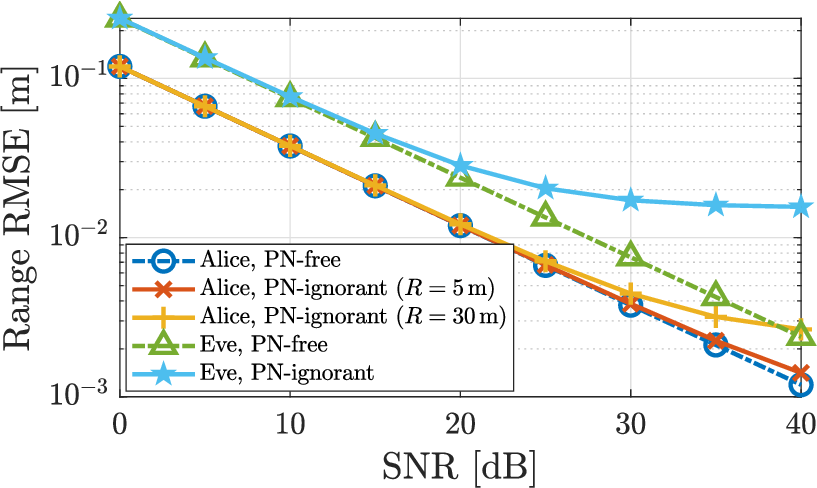}
	\vspace{-0.2in}
	\caption{Range \ac{RMSE} with respect to SNR with $\fdb = \fdbe = 100 \, \rm{Hz}$.} 
	\label{fig_results_sensing_snr}
	%\vspace{-0.1in}
\end{figure}
%%%%%%%%%%%%%%%%%%%%%%%%%%%%%%%%%%%%%%%%%%

\subsection{Sensing Performance vs. Oscillator Quality}\label{sec_result_lo}
To evaluate ranging accuracy under different levels of oscillator quality \cite{PN_OFDM_rate_2011,PN_OFDM_PLL_TCOM_2007 ,Demir_PN_2006}, we plot the theoretical bounds on range estimation with respect to the $3$-dB bandwidth of Alice's \ac{LO}, $\fdb$, while fixing that of Eve's to $\fdbe = 100 \, \rm{Hz}$, as shown in Fig.~\ref{fig_results_sensing_3db} with $\snra = \snre = 10 \, \rm{dB}$. In compliance with the \ac{LO} asymmetry depicted in Fig.~\ref{fig_pn_stats_contrast}, the PN-ignorant bound at Eve rises much faster than that at Alice. The TX-side PN at Alice contributes a full dynamic Brownian walk at Eve, but only a small self-referenced increment at Alice, as discussed in Sec.~\ref{sec_eve_pn_stat}. This leads to widening \ac{SPG}, as defined in \eqref{eq_SPG}, with increasing $\fdb$, providing a powerful privacy-enhancing mechanism in ISAC systems. Moreover, we observe smaller degradation in Alice's performance for closer targets as a result of delay-dependent PN statistics at Alice.

%%%%%%%%%%%%%%%%%%%%%%%%%%%%%%%%%%%%%%%%
\begin{figure}[t]
	\centering
    %\vspace{-0.1in}
	\includegraphics[width=1\linewidth]{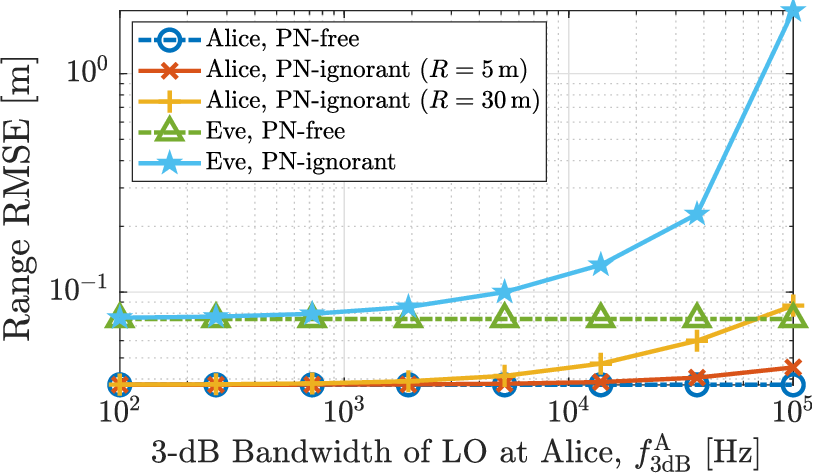}
	\vspace{-0.2in}
	\caption{Range \ac{RMSE} with respect to $3$-dB bandwidth of Alice's \ac{LO} while fixing that of Eve's to $\fdbe = 100 \, \rm{Hz}$ at $\snra = \snre = 10 \, \rm{dB}$.} 
	\label{fig_results_sensing_3db}
	%\vspace{-0.1in}
\end{figure}
%%%%%%%%%%%%%%%%%%%%%%%%%%%%%%%%%%%%%%%%

\subsection{Sensing Privacy Gap vs. Communication Rate}
In this part, we investigate the three-way trade-off among the sensing performance of Alice, the sensing performance of Eve and the communication performance of the Alice–UE link.  Fig.~\ref{fig_results_tradeoff} depicts the trade-off curves between the \ac{SPG} in \eqref{eq_SPG} and the achievable rate in \eqref{eq_cap}, obtained by sweeping $\fdb$ over $[100 \, \rm{Hz}, 100 \, \rm{kHz}]$. As $\fdb$ increases, the oscillator quality at Alice deteriorates, leading to a reduction in achievable rate while simultaneously enlarging the \ac{SPG}, in agreement with Fig.~\ref{fig_results_sensing_3db}. This reveals a fundamental trade-off in sensing-secure ISAC systems. Namely, enhancing sensing privacy comes at the expense of communication performance, with the oscillator quality serving as the key knob governing this balance. Consistent with the findings in Sec.~\ref{sec_result_snr} and Sec.~\ref{sec_result_lo}, more favorable trade-offs are observed for closer targets due to the range correlation effect. Moreover, at low communication SNRs, substantial gains in \ac{SPG} can be achieved with only a moderate reduction in achievable rate since the communication link is already operating in a noise-limited regime where the rate is mainly constrained by \ac{AWGN} rather than \ac{PN}.

%%%%%%%%%%%%%%%%%%%%%%%%%%%%%%%%%%%%%%%%
\begin{figure}[t]
	\centering
    %\vspace{-0.1in}
	\includegraphics[width=1\linewidth]{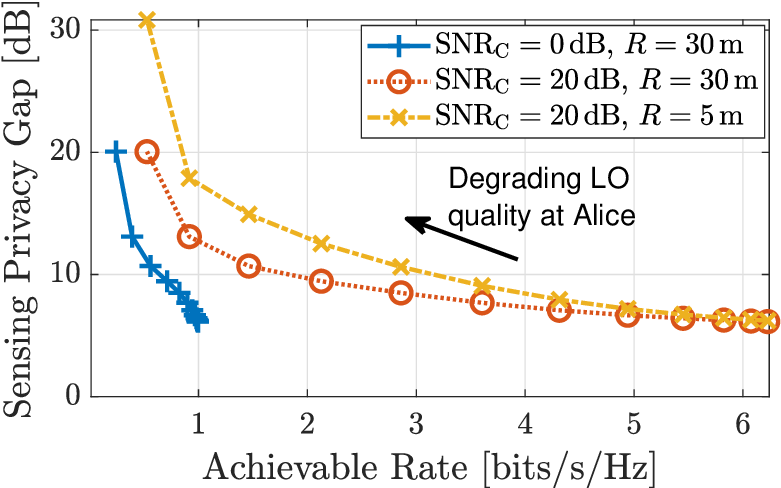}
	\vspace{-0.2in}
	\caption{Trade-offs between sensing privacy gap in \eqref{eq_SPG} and communication rate in \eqref{eq_cap} under various communication SNRs $\snrc$ and target ranges $R$ while fixing sensing SNRs to $\snra = \snre = 10 \, \rm{dB}$. The curves are obtained by sweeping $\fdb$ over $[100 \, \rm{Hz}, 100 \, \rm{kHz}]$.} 
	\label{fig_results_tradeoff}
	%\vspace{-0.1in}
\end{figure}
%%%%%%%%%%%%%%%%%%%%%%%%%%%%%%%%%%%%%%%%

%%%%%%%%%%%%%%%%%%%%%%%%%%%%%%%%%%%%%
\section{Conclusion}\label{sec_conclusion}
We have studied the role of \ac{PN} in sensing-secure OFDM ISAC systems with monostatic sensing at Alice, passive bistatic sensing at Eve and communication to a UE. The analysis shows that shared- and independent-\ac{LO} architectures induce fundamentally different PN statistics, enabling \ac{LO} asymmetry to enlarge the sensing privacy gap in favor of Alice. Simulations further reveal the resulting three-way trade-off among Alice sensing, Eve sensing and UE rate. Future work will consider PN-aware receivers, multi-target scenarios and range-Doppler processing.

\bibliographystyle{IEEEtran}
\bibliography{IEEEabrv,Sub/isac}

\end{document}